\newif\if@restonecol
\newcommand{\formulae}{formul\ae\xspace}
\newcommand{\dpll}[1]{{\sc DPLL}\xspace}
\newtheorem{definition}{Definition}
\newtheorem{proposition}{Proposition}
\newtheorem{theorem}{Theorem}
\newcommand{\COMMENT}[1]{}
\newcommand{\ux}{\ensuremath{\underline x}}
\newcommand{\uu}{\ensuremath{\underline u}}
\newcommand{\uz}{\ensuremath{\underline z}}
\newcommand{\cM}{\ensuremath \mathcal M}
\newcommand{\cT}{\ensuremath \mathcal T}
\newcommand{\cS}{\ensuremath \mathcal S}
\newcommand{\cC}{\ensuremath \mathcal C}
\newcommand{\cI}{\ensuremath \mathcal I}
\newcommand{\val}[1]{\ensuremath #1_{\cM, \cI}}
\newcommand{\valp}[1]{\ensuremath #1_{\cM, \cI'}}
\newcommand{\Proc}{\ensuremath \mathtt{Proc}}
\newcommand{\Data}{\ensuremath \mathtt{Data}}
\renewcommand{\int}{\ensuremath {\mathcal I}}
\newcommand{\tv}{\ensuremath{{\bf v}}\xspace}
\title{Counter Simulations via Higher Order \\ Quantifier Elimination: a preliminary report }
\author{Silvio Ghilardi
 \institute{ Universit\`a degli Studi di Milano, Milano, Italy\thanks{The first authar was supported by the INdAM's GNSAGA group.}}
\and
Elena Pagani
 \institute{ Universit\`a degli Studi di Milano, Milano, Italy}
}
\begin{document}
\maketitle

\begin{abstract}
\setcounter{footnote}{0}
 Quite often, verification tasks for distributed systems are accomplished via counter  abstractions. Such abstractions can sometimes be justified via 
simulations and bisimulations. In this work, we supply  logical foundations to this practice, by  a specifically designed technique for second order quantifier elimination. Our method, once applied to  specifications of verification problems for 
parameterized distributed 
%synchronous fault-tolerant 
systems, produces integer variables systems that are ready to be model-checked by current SMT-based tools. We demonstrate the feasibility of the approach with a prototype implementation and first experiments. %\emph{This paper has been accepted at PxTP 2017, but the final version might still be subject to small modifications.}
\end{abstract}

\section{Introduction}~\label{sec:introduction} 

In this paper we introduce a methodology moving
%This paper  reports first results of an innovative methodology that moves 
from higher order specifications down to 
%effective 
simulations expressible inside first-order theories, where SMT techniques can be effectively applied. We believe that this methodology, requiring user intervention only for initial choices at design phase, can supply a good example of the interaction between  logic
engines operating at different expressivity levels. The motivation of our research lies in the area of the verification of distributed (especially fault-tolerant) algorithms, where benchmarks for our first experiments were taken from.

The automated, formal verification of distributed algorithms is a crucial, although challenging, task.  
The processes executing these algorithms communicate with one another, their actions depend on the messages received, and their number is arbitrary.  These characteristics are captured by so called reactive parameterized systems.  The task of validating or refuting properties of these systems is daunting, due to the difficulty of limiting the possible evolutions, thus having to deal with genuinely infinite-state systems.   

Building accurate \emph{declarative}  models of these systems requires powerful formalisms, involving arrays~\cite{GhilardiR10a},\cite{GhilardiR10} and, in the fault-tolerant case, also some fragment of higher order logic~\cite{dragoj},\cite{arca} (this is needed in order to have some form of comprehension to handle cardinalities of  definable sets). On the other hand, for a long time, it has been observed that 
 {\em counter 
systems
}~\cite{delzannoFMSD, bro1,bro2} can be sufficient to specify many problems (like cache coherence or broadcast protocols) in the distributed algorithms area. Recently, 
counter abstractions have been effectively used also in the verification of fault-tolerant distributed protocols~\cite{konnov2013A,konnov2013B,KonnovVW15,cilc16}.
It should be noticed that, unlike what happens in the old framework of~\cite{delzannoFMSD, bro1,bro2}, these new applications are often (although not always) based on abstractions that can only \emph{simulate} the original algorithms and such simulation may sometimes be the 
result of an a priori reasoning on the characteristics of the algorithm, embedded into the model. 
Despite this fact, all  runs from the original specifications are represented in the simulations with counter systems
(this is in fact the formal content of the notion  of a `simulation'), thus for instance safety certifications for the simulating model apply also to the original model. The advantage of this approach is that, as it is evident e.g. from the experiments in~\cite{cilc16}, \emph{verification of counter systems 
%can be performed very  well and is much more 
is very well
supported by the existing technology}. In fact, although basic problems about counter systems are themselves undecidable,
the sophisticated machinery (predicate abstraction~\cite{FlanaganQ02}, IC3~\cite{ic3-tn,pdr}, etc.) developed inside the SMT community lead to impressively performing tools like $\mu Z$~\cite{muZ}, nuXmv~\cite{nuXmv}, SeaHorn~\cite{seahorn}, \dots{} which are nowadays being used to 
solve  many verification problems regarding counter systems.

Being conscious that building such simulations requires in any case some human interaction, we tried to build in this paper a uniform framework. Our framework relies on recent powerful techniques for deciding cardinality and array constraints~\cite{KunkakJAR,arca,JAR15}; as pointed out in~\cite{KunkakJAR,fmsd17,prep}, sometimes such decision techniques can   be modified so as to supply quantifier elimination results and, via
 these quantifier elimination results, we shall show how 
to \emph{automatically build the best possible counter simulations} users can obtain  once they fixed (i) the specification of the system, (ii) possibly some helpful invariants and (iii) the counter variables involved in the projected  simulation (such variables  are cardinality counters for definable sets). We demonstrate  
the effectiveness of our approach by producing, for some  benchmarks, counter systems simulations which are effectively model-checked by current SMT-based tools. 

\subsection{A four-steps strategy}\label{subsec:strategy}

\emph{Our general four-steps strategy  can be summarized as follows:}
 (1) system specifications (together with their safety problems) are formulated in higher order logic, i.e. using a declarative formalism which is sufficiently expressive and close to informal specifications; (2) counters for definable sets are added by the user to the system specification, in such a way that the observationally relevant properties can be reformulated as arithmetic properties of these counters; (3) higher order variables are eliminated, by applying an automatic procedure; (4) the resulting system is finally model-checked by using an SMT-based tool for counter systems. The reader is referred to Section~\ref{app:onethird} for a detailed example.
 
 In this plan, only steps (1) and (2) require manual 
intervention; step (3) is effective every time the syntactic restrictions for  quantifier elimination procedures are matched; step (4) is subject to two risks, namely to the fact that model-checkers may not terminate on such (undecidable) problems and to the fact that simulations may introduce spurious traces. 
Non-termination, giving the actual state of the art (much progress has been made both at the theoretical and at the practical level) is less frequent than one can imagine and   there are also positive theoretical
%sg 1/9/17
results - both classical~\cite{lics,GhilardiR10a} and more recent~\cite{KonnovVW17} - that guarantee termination in some interesting cases.
Concerning the second risk, notice 
that if spurious traces arise, they can be recognized because SMT tools supply concrete values for counterexamples; then,
 one can go back to step (2) and  refine the abstraction by adding more counters.
 
The paper is structured as follows: Section~\ref{sec:specifications} gives general foundations; Section~\ref{sec:simulations} outlines the  formalizations  we use and supplies 
a basic quantifier elimination result; Section~\ref{app:onethird} analyzes a concrete benchmark; Section~\ref{sec:implementation} describes our implementation and our first experiments. Section~\ref{sec:conclusions} concludes.

\section{System Specifications in Higher Order Logic}\label{sec:specifications}

The behavior of a computer system can be modeled through a {\em transition system}, which is a tuple
$$\mathcal{T} = 
(W, W_0, R, AP, V)$$
such that (i) $W$ is the set of possible configurations, (ii) $W_0 \subseteq W$ is the set of initial configurations, (iii) $AP$ is a set of `atomic propositions', 
(iv) $V: W\longrightarrow AP$ is a function labeling each state with the set of propositions `true in it', 
(v) $R \subseteq W \times W$ is the transition relation:
%, and $v$ associates to every configuration a boolean assignment over variables describing the process states.  
$w_1 R w_2$ 
%(with $w_1, w_2 \in W$) 
describes how the system can `evolve in one step'. 

\begin{definition}\label{def:simulation}
 We say that the transition system $\mathcal{T}'=(W', W'_0, R', AP, V')$ 
\emph{simulates} the transition system $\mathcal{T}=(W, W_0, R, AP, V)$ (notice that $AP$ is the same in the two systems) iff there is a relation $\rho\subseteq W\times W'$ 
(called \emph{simulation}) such that 
\begin{description}
 \item[{\rm (i)}] for all $w\in W$ there is $w'\in W'$ such that $w\rho w'$; 
 \item[{\rm (ii)}] if $w\rho w'$ and $w\in W_0$, then $w'\in W'_0$; 
 \item[{\rm (iii)}] if $w\rho w'$ and
$wR v$, then there is $v'\in W'$ such that  $w'R' v'$ and $v\rho v'$;
\item[{\rm (iv)}] if $w\rho w'$, then $V(w)= V'(w')$; 
\end{description}
 If the converse $\rho^{op}$  of $\rho$ is also a simulation, then $\rho$ is said to be a \emph{bisimulation} and $\mathcal{T}'$ and $\mathcal{T}$ are said to be \emph{bisimilar}. 
\end{definition}

Bisimilar systems are equivalent in the sense that the properties expressible in common 
temporal logic specifications (e.g. in $CTL, LTL, CTL^*$, etc.) are invariant under bisimulations; simulation is also useful as important properties (like safety properties, or more generally properties expressible in sublogics like $ACTL$) can be transferred from a system to the systems simulated by it (but not vice versa). 

We write $\cT \leq \cT'$ iff $W\subseteq W'$ and the inclusion is a simulation. This relation is a partial order;  notice that if $\cT'$ simulates $\cT$ and $\cT'\leq \cT''$, then $\cT''$ also simulates $T$; in this case, the simulation supplied by $\cT'$ is said to be \emph{stronger} or \emph{better} than the simulation supplied by $\cT''$ (in fact, one has more chances of establishing e.g. a safety property of $\cT$ by using $\cT'$ than by using $\cT''$). 

The above formalism of transition systems is often too poor, because it cannot cover rich features arising in concrete applications. To have enough expressive power, we use higher order logic, more specifically \emph{Church's type theory} (see e.g.~\cite{Andrews} for an introduction to the subject).\footnote{
Some notation we use might look slightly non-standard; it is similar to the notation of~\cite{LambekScott}. 
}
It should be noticed, however, that our primary aim is \emph{to supply a framework for model-checking and not to build a deductive system}.
Thus we shall introduce below only  suitable languages (via  higher order signatures) and a semantics for such languages - such semantics can be specified e.g. inside
any 
classical
foundational system for set theory.
%; hence we are interested only in classical set-theoretic semantics.
%and in satisfiability of \formulae having peculiar shape - as such, these \formulae might be much more tractable than arbitrary \formulae subject to well-known limitative results. 
In addition, as typical for model-checking, we want 
to constrain our semantics so that certain sorts have a fixed meaning: the primitive sort $\mathbb Z$ has to be interpreted as the (standard) set of  integers, the sort $\Omega$ has to be interpreted as the set of truth values $\{ \mathtt{tt}, \mathtt{ff}\}$; moreover, some primitive sorted operations like $+, 0, S$ (addition, zero, successor for natural numbers) and $\wedge, \vee, \to, \neg$ (Boolean operations for truth values) must have their natural interpretation. 
Some sorts
might be \emph{enumerated}, i.e. they must be interpreted as a specific finite `set of values'  $\{ \mathtt{a_0},\dots, \mathtt{a_k}\}$, where the 
$\mathtt{a_i}$ are mentioned among the constants of the language and are assumed to be distinct. Finally, we may ask for
a primitive sort to be interpreted as a \emph{finite set} (by abuse, we shall call such sorts \emph{finite}): for instance, we shall constrain in this way the sort $\Proc$ modeling the set of processes in a distributed system.
In  addition, if a sort is interpreted into a finite set, we may constrain some numerical parameter (usually, the parameter we choose for this is named  $\mathtt N$) to indicate the cardinality of such finite set. The notion of constrained signature below incorporates all the above requirements in a general framework.

A \emph{constrained signature} $\Sigma$ consists of a set of (primitive) sorts and of a  set of (primitive) sorted function symbols,\footnote{These include  0-ary function symbols, called constants; constants of sort $\mathbb{Z}$ will be called (arithmetic) \emph{parameters}.}
 together with a class $\cC_{\Sigma}$ of $\Sigma$-structures, called the \emph{models} of $\Sigma$.\footnote{
 In the standard model-checking literature $\cC_{\Sigma}$ is a singleton; here we must allow \emph{many} structures in  $\cC_{\Sigma}$, because our model-checking problems are \emph{parametric}: the sort modeling the set of processes of our system specifications must be interpreted onto a finite set whose cardinality is not a priori fixed. Our definition of a `constrained signature' is analogous to the definition of a `theory' in SMT literature; in fact, in SMT literature, a `theory' is just a pair given by a signature and a class of  structures.
 When transferred to a higher order context, such definition coincides with that of a  `constrained signature' above (thus our formal framework is very similar to e.g. that of~\cite{reynolds15}).
 } 
 Using primitive sorts,
 \emph{types} can be built up using exponentiation (= functions type);
%and product; 
 \emph{terms} can be built up using variables, function symbols, as well as $\lambda$-abstraction and functional application. 

 Our constrained signatures always include the sort $\Omega$ of truth-values;
 %(in the models, $\Omega$ is interpreted as the two-element set $\{\mathtt{tt}, \mathtt{ff}\}$). 
terms of type $\Omega$ are called \emph{formulae} (we use greek letters $\alpha, \beta, \dots,\phi, \psi, \dots$ for them). For a type $S$, the type $S\to\Omega$ is indicated as $\wp(S)$ and called the \emph{power set}  of $S$; if $S$ is constrained to be interpreted as a finite set,  $\Sigma$ might contain a cardinality operator $\sharp: \wp(S) \longrightarrow  \mathbb{Z}$, whose interpretation is assumed to be the intended one ($\sharp s$ is the number of the elements of $s$ - as such it is always a nonnegative number). If $\phi$ is a formula and $S$ a type, we use $\{ x^S \mid \phi\}$ or just $\{ x \mid \phi\}$
for $\lambda x^S \phi$. We assume to have binary equality predicates for each type; universal and existential quantifiers for \formulae can be introduced by standard abbreviations (see e.g.~\cite{LambekScott}).
%Notice that the $\lambda$-abstraction  operator $\lambda x$, as 
%well the  quantifiers $\exists x, \forall x$, 
%bind the variable $x$; 
We shall use the roman letters $x, y, \dots, i,j,\dots, v, w, \dots$ for variables
%. We prefer to have different names for free and bound variables, so we shall use the letters $x, y, \dots$ or
%$i, j, \dots$ for bound variables and $v, w, \dots$ for free variables 
(of course, each variable is suitably typed, but types are left implicit if confusion does not arise). Bold letters like $\tv$ (or underlined letters like $\ux$) are used for tuples %(say, $\tv=v_1, \dots, v_n$) 
of free variables;
below, we indicate with $t(\tv)$ the fact that the term $t$  has free  variables included in the list $\tv$
(whenever this happens, we say that $t$ is a $\tv$-term, or a $\tv$-formula if it has type $\Omega$).
%
%When we speak of a substitution, we always mean `substitution without capture'; 
The result of a simultaneous substitution 
of the tuple of variables $\tv$ by the tuple of (type matching) terms $\uu$ in $t$ is denoted by $t(\uu/\tv)$ or directly as $t(\uu)$.

\begin{comment}
Given a constrained signature $\Sigma$, a  \emph{$\Sigma$-model} $\cM$ is a map associating with every primitive sort 
of $\Sigma$ a set and with every primitive sorted function symbol of $\Sigma$ a sort-matching function, in such a way that 
 the above mentioned $\Sigma$-constraints are respected. 
\end{comment}
 Given a tuple of  variables $\tv$, 
a \emph{$\Sigma$-inter\-pre\-tation} of $\tv$ in a model $\cM\in \cC_{\Sigma}$ is a function $\cI$ mapping each variable onto an element of the correponding
type (as interpreted in $\cM$).
The evaluation of a term $t(\tv)$ according to $\cI$ is recursively defined in the standard way and is written as $\val{t}$. A $\Sigma$-formula
$\phi(\tv)$ is \emph{true} under $\cM, \cI$ iff it evaluates to $\mathtt{tt}$ (in this case, we may also say that $\val{\tv}$ \emph{satisfies} $\phi$); $\phi$ is
\emph{valid} iff it is true  for all models $\cM\in \cC_{\Sigma}$ and all interpretations $\mathcal I$ of  $\tv$ over $\cM$. 
We write $\models_{\Sigma} \phi$ (or just $\models \phi$) 
to mean that $\phi$ is valid and $\phi\models_{\Sigma} \psi$ (or just $\phi\models \psi$) to mean that $\phi\to \psi$ is valid;  we say that $\phi$ and $\psi$ are $\Sigma$-\emph{equivalent} (or just equivalent) iff $\phi\leftrightarrow \psi$ is valid.

Constrained signatures are used for our system specifications as follows:

\begin{definition}\label{def:ss}
 A \emph{system specification} $\cS$ is a tuple
 $$
 \cS~=~(\Sigma, \tv, \Phi, \iota, \tau, AP)
 $$
 where (i) $\Sigma$ is a constrained signature, (ii) $\tv$ is a tuple of variables, (iii) $\Phi,\iota$ are $\tv$-\formulae and $AP$ is a set of $\tv$-\formulae, (iv) $\tau$ is a $(\tv, \tv')$-formula (here the $\tv'$ are renamed copies of the $\tv$) such that 
 \begin{equation}\label{eq:invariant}
  \iota(\tv)\models_{\Sigma} \Phi(\tv), \qquad \Phi(\tv)\wedge \tau(\tv, \tv')\models_{\Sigma} \Phi(\tv')~~~. 
 \end{equation}
\end{definition}

In the above definition, the $\tv$ are meant to be the variables specifying the system status, $\iota$ is meant to describe initial states, 
$\tau$ is meant to describe the transition relation and the  $AP$ are the `observable propositions' we are interested in. The $\tv$-formula 
$\Phi$, as it is evident from~\eqref{eq:invariant}, describes an invariant of the system (known to the user).
%\footnote{
Of course, using the expressive power of our type theory, it would be easy to write down  the `best possible' invariant describing in a precise way the set of reachable states; however, the $\tv$-formula for such invariant might involve logical constructors (like fixpoints) lying outside the tractable fragments we
plan to use. On the other hand, invariants are quite useful - and often essential - in concrete verification tasks, this is why we included them in Definition~\ref{def:ss}.
%}

It is now clear how to associate a transition system with any system specification:

\begin{definition}
 The transition system  of the system specification $
 \cS~=~(\Sigma, \tv, \Phi, \iota, \tau, AP)
 $ is the transition system $T^\cS$ given by $(W^\cS, W_0^\cS, R^\cS, AP^\cS, V^\cS)$, where: (i) the set of states $W^\cS$ is the set of the tuples $\val{\tv}$ satisfying $\Phi(\tv)$, varying $\cM, \cI$ among the
 $\Sigma$-models and $\Sigma$-interpretations of $\tv$; (ii) $W^\cS_0$ is the set of states satisfying $\iota(\tv)$;
 (iii) $R^\cS$ contains the couples of states $\val{\tv},\valp{\tv'}$\footnote{Notice that $\cM$ is the same;
 $W^\cS$ might be a proper class, but to avoid this it is sufficient to ask for the set of models $\cC_{\Sigma}$ of our constrained signature
 $\Sigma$ to be a \emph{set} (not a proper class).
 } satisfying $\tau(\tv, \tv')$; (iv) $AP^\cS$ is $AP$; (v) for $\psi(\tv)\in AP^\cS$, we have that $V(\psi)$ contains precisely the states satisfying $\psi(\tv)$.
\end{definition}

\section{Simulations and Counter Abstractions}\label{sec:simulations}

Model-checking a transition system like $T^\cS$ might be too difficult, this is why it could be useful to replace it with a (bi)similar, simpler system: in our applications, we shall try to replace $\cS$ by some ${\cS'}$ whose variables are all integer variables. To this aim, we `project' $\cS$ onto a subsystem $\cS'$, i.e. onto a system comprising only some of the variables of $\cS$. 

In order to give a precise definition of what we have in mind, we must first consider subsignatures: here  a \emph{subsignature} $\Sigma_0$ of $\Sigma$ is a signature obtained from $\Sigma$ by dropping some symbols of $\Sigma$ and taking as $\Sigma_0$-models the class $\cC_{\Sigma_0}$ of the restrictions $\cM_{\vert \Sigma_0}$ to the 
$\Sigma_0$-symbols of the structures $\cM\in \cC_{\Sigma}$.

\begin{comment}
For some considerations involving bisimulations, it is useful to consider special kinds of subsignatures, those whose  models
 are quite close to the models of the original signature: 
we say that $\Sigma_0$ is a \emph{core} subsignature iff every $\Sigma_0$-model is the restriction of a \emph{unique} (up to isomorphism) $\Sigma$-model. As a typical example (used in our applications) of this situation, consider a signature $\Sigma$
containing the sorts $\mathbb{Z}, \Omega$ (with the usual operations), some enumerated sorts, and a finite sort $\Proc$ whose cardinality is constrained 
by a parameter $\mathtt{N}$ (we do not have operations on $\Proc$, just the cardinality function $\sharp$ on $\wp(\Proc)$). Suppose that now we consider the subsignature $\Sigma_0$ containing just $\mathbb{Z}$ and $\Omega$ (with inherited operations) \emph{and} the parameter $\mathtt{N}$:  this is evidently a core subsignature, because the interpretation of enumerated sorts and of the sort $\Proc$ is uniquely determined from the $\Sigma_0$-reduct.\footnote{Allowing dependent types in our signatures, one can easily produce further examples by similar mechanisms.}
\end{comment}

\begin{definition}\label{def:sub}
 Let  $
 \cS=(\Sigma, \tv, \Phi, \iota, \tau, AP)
 $ be a system specification; a \emph{sub-system specification} of it is a system specification 
 $
 \cS_0= (\Sigma_0, \tv_0, \Phi_0, \iota_0, \tau_0, AP_0)
 $
 where $\Sigma_0$ is a subsignature of $\Sigma$, $\tv_0\subseteq \tv$, $AP_0=AP$ and we have 
 \begin{equation}\label{eq:sub}
  \Phi(\tv)\models_{\Sigma} \Phi_0(\tv_0), ~~~~\iota(\tv)\models \iota_0(\tv_0), 
  ~~~~\Phi(\tv)\wedge \tau(\tv,\tv')\models \tau_0(\tv_0, \tv'_0)
 \end{equation}
 %If $\Sigma_0$ is a core subsignature of $\Sigma$, we say that $\cS_0$ is a \emph{core sub-system specification} of $\cS$.
\end{definition}

The following fact is immediate:

\begin{proposition}
 Let $\cS_0$ be a sub-system specification of $\cS$ like in Definition~\ref{def:sub}; then the map $\pi_{\cS_0}$ associating $(\tv_0)_{\cM_{\vert \Sigma_0}, \cI_{\vert \tv_0}}$ to $\val{\tv}$ is a simulation of $T^{\cS}$ by $T^{\cS_0}$ (called a \emph{projection simulation} over  $\Sigma_0, \tv_0$). 
\end{proposition}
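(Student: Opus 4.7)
The plan is to verify the four clauses of Definition~\ref{def:simulation} in turn, taking $\rho\subseteq W^\cS\times W^{\cS_0}$ to be the (functional) graph of $\pi_{\cS_0}$, i.e.\ $w\rho w'$ iff $w'=\pi_{\cS_0}(w)$. The underlying tool used throughout is the elementary \emph{locality} fact that if $\theta(\ux)$ is a $\Sigma_0$-formula whose free variables are in a subtuple $\ux\subseteq \tv$, then for any $\Sigma$-model $\cM$ and any $\tv$-interpretation $\cI$ in $\cM$, the truth value of $\theta$ under $\cM,\cI$ coincides with its truth value under $\cM_{\vert\Sigma_0},\cI_{\vert\ux}$. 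This observation, combined with the three entailments in~\eqref{eq:sub}, does essentially all the work.

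First I would check (i). Given $w=\val{\tv}\in W^\cS$, we have $\cM\models\Phi(\val{\tv})$ by definition of $W^\cS$; applying the first entailment of~\eqref{eq:sub} together with locality gives $\cM_{\vert\Sigma_0}\models\Phi_0((\tv_0)_{\cM_{\vert\Sigma_0},\cI_{\vert\tv_0}})$, so $\pi_{\cS_0}(w)\in W^{\cS_0}$. Since $\pi_{\cS_0}$ is defined on all of $W^\cS$, clause (i) holds. Clause (ii) is analogous: if $w\in W_0^\cS$, then $\cM,\cI$ satisfy $\iota(\tv)$, and the second entailment of~\eqref{eq:sub} plus locality yields that $\cM_{\vert\Sigma_0},\cI_{\vert\tv_0}$ satisfy $\iota_0(\tv_0)$, i.e.\ $\pi_{\cS_0}(w)\in W_0^{\cS_0}$.

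For (iii), suppose $w=\val{\tv}\in W^\cS$ and $wR^\cS v$ with $v=\valp{\tv'}$ (for the same model $\cM$, interpretations $\cI,\cI'$ agreeing up to the renaming of $\tv$ into $\tv'$). By the definition of $R^\cS$, we have $\cM\models\tau(\val{\tv},\valp{\tv'})$, and by the choice of $w\in W^\cS$ we also have $\cM\models\Phi(\val{\tv})$. The third entailment of~\eqref{eq:sub} then gives $\cM_{\vert\Sigma_0}\models\tau_0((\tv_0)_{\cM_{\vert\Sigma_0},\cI_{\vert\tv_0}},(\tv'_0)_{\cM_{\vert\Sigma_0},\cI'_{\vert\tv'_0}})$ via locality, so taking $v':=\pi_{\cS_0}(v)$ we get $\pi_{\cS_0}(w)R^{\cS_0}v'$ together with $v\rho v'$, as required.

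Finally, for (iv), recall that $AP_0=AP$ and that each $\psi\in AP_0$ is a $\Sigma_0$ $\tv_0$-formula; by locality, such a $\psi$ is satisfied by $\val{\tv}$ in $\cM$ iff it is satisfied by $(\tv_0)_{\cM_{\vert\Sigma_0},\cI_{\vert\tv_0}}$ in $\cM_{\vert\Sigma_0}$, so $V^\cS(\psi)$ and $V^{\cS_0}(\psi)$ agree modulo $\pi_{\cS_0}$, i.e.\ $w\in V^\cS(\psi)\iff\pi_{\cS_0}(w)\in V^{\cS_0}(\psi)$. There is no real obstacle here; the only delicate point is the bookkeeping around the fact that the ``same'' $\Sigma$-model $\cM$ and its reduct $\cM_{\vert\Sigma_0}$ evaluate $\Sigma_0$-formulae identically on the restricted interpretation, which is precisely the locality remark above and is used uniformly in all four clauses.
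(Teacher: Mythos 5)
Your proof is correct, and it is exactly the direct verification the paper has in mind: the paper states this proposition without proof (calling it ``immediate''), and your argument --- checking the four clauses of Definition~\ref{def:simulation} for the graph of $\pi_{\cS_0}$ using the three entailments of~\eqref{eq:sub} together with the standard reduct/locality lemma for $\Sigma_0$-formulae --- is the intended routine unwinding. No gaps; the only implicit point you rightly make explicit is that the \formulae in $AP=AP_0$ must already be $\Sigma_0$-\formulae in $\tv_0$ for $\cS_0$ to be a system specification, which is what makes clause (iv) go through.
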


Projection simulations are ordered according to the ordering of the simulations of $\cS$
they produce, i.e. we say that $\cS_0$ is \emph{stronger} or \emph{better} than $\cS'_0$ iff $\cT^{\cS_0}\leq \cT^{\cS'_0}$. Once $\Sigma_0, \tv_0$ are fixed, one may wonder whether 
there exists the best projection simulation over $\Sigma_0, \tv_0$. The following straightforward result supplies a (practically useful) sufficient condition:

\begin{proposition}\label{prop:simulation} Let $
 \cS=(\Sigma, \tv, \Phi, \iota, \tau, AP)
 $ be a system specification, let $\Sigma_0$ be a subsignature of $\Sigma$ and let $\tv_0\subseteq \tv$ be $\Sigma_0$-variables.
 Suppose that there exist $\Sigma_0$-\formulae $\Phi_0(\tv_0), \iota_0(\tv_0), \tau_0(\tv_0, \tv'_0)$ such that (let $\tv:=\tv_0, \tv_1$):
 \begin{description}
  \item[{\rm (i)~\;}] $\models_{\Sigma}\Phi_0(\tv_0) \leftrightarrow \exists \tv_1 \Phi(\tv_0, \tv_1)$;
  \item[{\rm (ii)~}] $\models_{\Sigma} \iota_0(\tv_0) \leftrightarrow \exists \tv_1 \iota(\tv_0, \tv_1)$;
  \item[{\rm (iii)}] $\models_{\Sigma} \tau_0(\tv_0, \tv'_0)
  \leftrightarrow \exists \tv_1 \,\exists \tv'_1 (\Phi(\tv_0, \tv_1)\wedge \tau(\tv_0, \tv_1, \tv_0', \tv_1'))$.
 \end{description}
 If we let $\cS_0$ be the subsystem specification $(\Sigma_0, \tv_0, \Phi_0, \iota_0, \tau_0, AP)$,
  then  the projection simulation $\pi_{\cS_0}$ is the best projection simulation over $\Sigma_0, \tv_0$.
 \end{proposition}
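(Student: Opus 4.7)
The plan is to prove two assertions in turn: first, that $\cS_0$ as defined is in fact a sub-system specification of $\cS$ in the sense of Definition~\ref{def:sub} (so that $\pi_{\cS_0}$ is a legitimate projection simulation), and second, that for any competing sub-system specification $\cS'_0$ over the same $\Sigma_0$ and $\tv_0$ one has $\cT^{\cS_0}\leq \cT^{\cS'_0}$.

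For the first assertion, I first verify that $\cS_0$ meets the invariant conditions \eqref{eq:invariant} required of any system specification. To see $\iota_0 \models_{\Sigma_0} \Phi_0$, assume $\iota_0(\tv_0)$ holds in some $\cM,\cI$; by the right-to-left direction of (ii) there is a witness $\tv_1$ with $\iota(\tv_0,\tv_1)$, which by the invariance of $\cS$ yields $\Phi(\tv_0,\tv_1)$, and then (i) gives $\Phi_0(\tv_0)$. The preservation condition $\Phi_0\wedge \tau_0\models \Phi_0(\tv'_0)$ follows the same pattern: (iii) supplies witnesses $\tv_1,\tv'_1$ for $\Phi\wedge\tau$, invariance of $\cS$ propagates $\Phi$ to $\Phi(\tv'_0,\tv'_1)$, and (i) concludes. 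The three entailments in~\eqref{eq:sub} are equally immediate from the forward directions of (i)--(iii): a true $\Phi(\tv_0,\tv_1)$ witnesses the existential on the right-hand side of (i), so $\Phi_0(\tv_0)$ holds; analogously for $\iota$ and for $\tau$. Nothing depends on the move between $\Sigma$- and $\Sigma_0$-validity since the $\Sigma_0$-models are by definition the $\Sigma_0$-reducts of the $\Sigma$-models.

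For the optimality assertion, let $\cS'_0=(\Sigma_0,\tv_0,\Phi'_0,\iota'_0,\tau'_0,AP)$ be any sub-system specification of $\cS$ over the same $\Sigma_0$ and $\tv_0$; by Definition~\ref{def:sub} one has $\Phi\models \Phi'_0$, $\iota\models \iota'_0$, and $\Phi\wedge\tau\models \tau'_0$. I must show $W^{\cS_0}\subseteq W^{\cS'_0}$ and that this inclusion is a simulation. Pick $\val{\tv_0}\in W^{\cS_0}$, i.e.\ one satisfying $\Phi_0$; the backward direction of (i) produces $\tv_1$ with $\Phi(\tv_0,\tv_1)$, and then $\Phi\models \Phi'_0$ yields $\Phi'_0(\tv_0)$, placing the state in $W^{\cS'_0}$. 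The four conditions for the inclusion to be a simulation are then routine: condition (i) of Definition~\ref{def:simulation} is the inclusion itself; condition (ii) on initial states uses the backward direction of (ii) together with $\iota\models \iota'_0$; condition (iii) on transitions uses the backward direction of (iii) together with $\Phi\wedge\tau\models \tau'_0$ (and the fact that the target state $\val{\tv'_0}$ lies in $W^{\cS'_0}$ by the invariance already established for $\cS'_0$); and condition (iv) on the labelling is immediate since $AP$ consists of $\tv_0$-formulae evaluated identically on either side.

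Conceptually, the argument just says that $\Phi_0,\iota_0,\tau_0$, defined via existential projection onto $\tv_0$ of $\Phi$, $\iota$, and $\Phi\wedge\tau$ respectively, are the \emph{tightest} $\Sigma_0$-formulae entailed by their source formulae; any competitor $\cS'_0$ must have weaker invariant, initial, and transition predicates, which is exactly what is needed for the inclusion to be a simulation. The main obstacle is therefore not mathematical but notational: one has to keep track of which variables are existentially projected away and distinguish carefully between the two directions of the bi-implications (i)--(iii), since the forward directions verify that $\cS_0$ qualifies as a sub-system specification while the backward directions (which produce witnesses) drive the optimality argument.
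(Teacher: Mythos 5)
Your proof is correct and takes essentially the same route as the paper's: the paper likewise first notes that $\cS_0$ is a sub-system specification (it dismisses this as ``clear'') and then, for any competitor $\cS'_0$, rewrites the entailments of~\eqref{eq:sub} via the equivalences (i)--(iii) into $\Phi_0\models\Phi'_0$, $\iota_0\models\iota'_0$, $\tau_0\models\tau'_0$, which is exactly your witness-passing argument for the inclusion being a simulation. You simply spell out the details the paper compresses.
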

 \begin{proof}
 That $\cS_0= (\Sigma_0, \tv_0, \Phi_0, \iota_0, \tau_0, AP)$ is a subsystem specification of $\cS$ is clear; let us now pick  another 
 subsystem specification $\cS'= (\Sigma_0, \tv_0, \Phi', \iota', \tau', AP)$ of $\cS$ inducing a projection simulation over the same subsignature
 $\Sigma_0$ and the same sub-tuple of variables $\tv_0$. According to~\eqref{eq:sub}, we have
 \begin{equation*}
  \Phi(\tv)\models_{\Sigma} \Phi'(\tv_0), ~~~~\iota(\tv)\models \iota'(\tv_0), 
  ~~~~\Phi(\tv)\wedge \tau(\tv,\tv')\models \tau'(\tv_0, \tv'_0)
 \end{equation*}
 that is
 \begin{equation*}
  \Phi_0(\tv_0)\models_{\Sigma} \Phi'(\tv_0), ~~~~\iota_0(\tv_0)\models \iota'(\tv_0), 
  ~~~~ \tau_0(\tv,\tv')\models \tau'(\tv_0, \tv'_0)
 \end{equation*}
 which guarantees that $\cT^{\cS_0}\leq \cT^{\cS'_0}$.
%$\hfill\dashv$ 
\end{proof}
 \begin{comment}
 \begin{proof}
 See Appendix~\ref{app1}. %$\hfill\dashv$
\end{proof}
\end{comment}
 
 To understand the meaning of the above proposition, one should keep in mind that there is no reason why the $\Sigma$-\formulae $\exists \tv_1 \Phi, \exists \tv_1 \iota$ and 
 $\exists \tv_1 \,\exists \tv'_1 (\Phi\wedge \tau)$
should be equivalent to $\Sigma_0$-\formulae  (in our applications, $\Sigma_0$ contains only
the sort and the symbols of linear first-order arithmetic,
so no higher order variables are allowed in $\Sigma_0$-\formulae). Thus, the road map to apply Proposition~\ref{prop:simulation}
 is to prove some \emph{quantifier-elimination} results in order to find $\Sigma_0$-\formulae equivalent to $\exists \tv_1 \Phi, \exists \tv_1 \iota, \exists \tv_1 \exists \tv'_1 (\Phi\wedge \tau)$.

\begin{comment}
 In some lucky cases, the best projection simulation is also a bisimulation:
 \begin{proposition}\label{prop:bisimulation}
 Let $\cS_0$ be a core sub-system specification of $\cS$ ($\cS$ and $\cS_0$ are as displayed in Definition~\ref{def:sub}) and let
  $\tv:=\tv_0, \tv_1$. Then 
 $\pi_{\cS_0}$ is a bisimulation iff 
 the following conditions
 \begin{description}
  \item[{\rm (i)~\;}] $\models_{\Sigma}\Phi_0(\tv_0) \leftrightarrow \exists \tv_1 \Phi(\tv_0, \tv_1)$;
  \item[{\rm (ii)}$'~$] $ \iota_0(\tv_0) \models_{\Sigma}\forall \tv_1 (\Phi(\tv_0, \tv_1)\to\iota(\tv_0, \tv_1))$;
  \item[{\rm (iii)}$'$] $
  \Phi_0(\tv_0) \wedge 
  \tau_0(\tv_0, \tv'_0)\models_{\Sigma} \forall \tilde \tv_1 ( \Phi(\tv_0, \tilde{\tv}_1)
  \to \exists \tv'_1 
  \tau(\tv_0, \tilde{\tv}_1, \tv'_0, \tv'_1))$,
 \end{description}
 are satisfied.
 \end{proposition}
\begin{proof}
 See  Appendix~\ref{app1}. %$\hfill\dashv$
\end{proof}

Notice that condition (i) is the same in Proposition~\ref{prop:simulation} and in Proposition~\ref{prop:bisimulation};
moreover, using such condition (i) (and the fact that $\iota_0(\tv_0) \models_{\Sigma_0}\Phi_0(\tv_0)$, see Definition~\ref{def:ss}) it is not difficult 
to see that the conditions of Proposition~\ref{prop:bisimulation} are stronger than the corresponding conditions of 
Proposition~\ref{prop:simulation}.
\end{comment}

\subsection{Counter Abstractions for Parameterized Systems}\label{sec:counters}

We now give a closer look at the signatures we need for modeling parameterized systems (i.e. systems composed by 
a finite - but arbitrary! - number of indistinguishable processes). 
We fix a constrained signature $\Sigma$ for the remaining part of the paper. Such $\Sigma$ should be adequate for
modeling parameterized systems, hence we assume that $\Sigma$ consists of:
\begin{itemize}
 \item[{\rm (i)}] the integer sort $\mathbb{Z}$, together with some parameters (i.e. free individual constants) as well as all operations and predicates of  linear arithmetic 
  (namely, $0, 1, +, -, =, <, \equiv_n$);
  \item[{\rm (ii)}] the enumerated truth value sort $\Omega$, with the constants $\mathtt{tt,ff}$ and the Boolean operations on them;
  \item[{\rm (iii)}] a finite sort $\Proc$, whose cardinality is constrained to be equal to the arithmetic
  parameter $\mathtt{N}$ (this sort models the processes - all identical to each other - taking part 
  in our parameterized system as actors); equality is the only predicate/function symbol defined on this sort;
  \item[{\rm (iv)}] further enumerated sorts $\Data$, modeling local status, local flags, etc.
\end{itemize}
 
The subsignature $\Sigma_0$ comprising only the items (i)-(ii) above is called the \emph{arithmetic subsignature} of $\Sigma$; 
the subsignature $\Sigma_2$ comprising only the items (ii) and (iv) above is called the \emph{data subsignature} of $\Sigma$.
Below, besides \emph{integer} variables (namely variables of sort $\mathbb{Z}$), \emph{data} variables (namely variables of sort $\Data$) and \emph{index}  variables (namely variables of sort $\Proc$), we use two other kinds of variables, that we call \emph{enumerated} and \emph{arithmetic   array-ids}: an enumerated array-id is a variable of type $\mathtt{\Proc\to \Data}$ and an arithmetic array-id is a variable of
 type $\mathtt{\Proc\to \mathbb{Z}}$.

Let  now $\cS=(\Sigma, \tv, \Phi, \iota, \tau, AP)$ be a system specification based on the above signature $\Sigma$.
The variables $\tv$ of $\cS$  include some integer variables $\tv_0$ and in addition variables for arithmetic and enumerated arrays-ids.
Let us suppose that $\tv=\tv_0\tv_1$, where 
$\tv_1$ is the tuple of array variables
and the $\tv_0$ are all the  integer variables of the system. We suppose also that the \formulae in $AP$ - namely the \formulae expressing observable properties - are all open $\tv_0$-\formulae (in particular, they are all
$\Sigma_0$-\formulae, where $\Sigma_0$ is the arithmetic subsignature of $\Sigma$).

Let $\cS=(\Sigma, \tv_0\tv_1, \Phi, \iota, \tau, AP)$ be as above. A \emph{counter abstraction} of $\cS$ is a subsystem specification of the kind $\cS_0=(\Sigma_0, \tv_0, \Phi_0, \iota_0, \tau_0, AP)$;
counter abstractions are ordered according to the ordering of the simulations of $\cS$
they produce, i.e. we say that $\cS_0$ is stronger than $\cS'_0$ iff $\cT^{\cS_0}\leq \cT^{\cS'_0}$. We are interested in sufficient conditions on $\Phi, \iota, \tau$
ensuring the existence of a strongest counter abstraction. We describe below the sufficient conditions for which we have a first implementation (for stronger conditions, requiring heavier machinery, see~\cite{prep}).

Below we use notations like $\phi(\ux), t(\ux), \dots$ to mean that the formula $\phi$, the term $t, \dots$ contains at most the free variables in the tuple $\ux$; notice also that, since there are no operation symbols defined on the sort $\Proc$, all $\Proc$-atoms~\footnote{
By a $\Proc$-atom (resp. $\Data$-atom)  we mean an atomic formula whose root predicate is applied to terms denoting an element of sort $\Proc$ (resp. $\Data$).
} must be equalities between $\Proc$-variables; for the same reasons, all subterms involving arrays-ids are flat, i.e.  must be of the kind $a(i)$ where $i$ is a variable of sort $\Proc$. Since $\Data$ is enumerated, all  $\Data$-atoms  must be of the kind
$a(i)=b(k)$ or $a(k)=\mathtt{a_i}$, where $a,b$ are enumerated arrays-ids, $i,k$ are $\Proc$-variables, and $ \mathtt{a_i}$ is a constant for a value of type $\Data$.\footnote{Atoms of the kind $\mathtt{a_i}=\mathtt{a_j}$ are equivalent to $\mathtt{ff}$ or to $\mathtt{tt}$ because enumerated values are assumed to be distinct. }
We call \emph{$\Data$-formula} a Boolean combination of $\Data$-atoms; we also call \emph{extended arithmetic term} a term of type $\mathbb{Z}$ which is an arithmetic parameter, a numeral, an arithmetic variable, a term of the kind $a(i)$ (where $a$ is an arithmetic array-id and $i$ a $\Proc$-variable) or a term 
of the kind $\sharp\{ k\mid \psi(k)\}$, where $\psi(k)$ is a $\Data$-formula in which only the single $\Proc$-variable $k$ occurs. An \emph{extended arithmetic atom} is a formula obtained from extended arithmetic terms by applying to them the arithmetic operations $+,-$ and the arithmetic predicates $=, <, \leq,\equiv_n$.
%\footnote{ It is better to include $\equiv_n$, because it ensures quantifier elimination at the level of linear arithmetic.} 

\begin{theorem}\label{thm:main}
The system specification $\cS=(\Sigma, \tv_0\tv_1, \Phi, \iota, \tau, AP)$ has a strongest (computable) counter abstraction
in case
 $\Phi, \iota, \tau$  are 
disjunctions of \formulae of the kind 
\begin{equation}\label{eq:ss}
\forall i\; \phi(i)
\end{equation}
where $\phi(i)$  is a Boolean combination of $\Data$-atoms and of extended arithmetic atoms (both containing just the  $\Proc$-variable $i$).
\end{theorem}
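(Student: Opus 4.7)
The plan is to verify the three hypotheses of Proposition~\ref{prop:simulation} by showing that the higher-order-quantified formulas $\exists \tv_1\, \Phi$, $\exists \tv_1\, \iota$, and $\exists \tv_1\,\exists \tv_1'\,(\Phi \wedge \tau)$ are each equivalent to a computable $\Sigma_0$-formula in the surviving integer variables $\tv_0$ (and $\tv_0'$). Since existential quantifiers distribute over disjunctions, and since $(\forall i\,\phi_1(i)) \wedge (\forall i\,\phi_2(i))$ is equivalent to $\forall i\,(\phi_1(i) \wedge \phi_2(i))$, one can first put $\Phi \wedge \tau$ into a disjunction of formulas of the required shape $\forall i\, \phi(i)$ (now possibly involving both $\tv_1$ and $\tv_1'$), and then reduce the whole problem to eliminating a block of array-id existentials $\exists \underline{a}$ from a single formula $\forall i\,\phi(i)$ of the syntactic form stipulated by the theorem.

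For the core elimination I would exploit the fact that $i$ is the only free $\Proc$-variable of $\phi(i)$: every occurrence of an array-id $a \in \underline{a}$ is either of the local form $a(i)$ or is buried inside a cardinality term $\sharp\{k\mid\psi(k)\}$ that depends on the whole array $a$ but not on $i$. First I would name these globals: for each distinct subterm $\sharp\{k\mid\psi(k)\}$ occurring in $\phi$, introduce a fresh integer variable $n_\psi$ and replace the subterm by $n_\psi$. The resulting body $\hat\phi(i)$ is then \emph{local in the arrays}, since every remaining array occurrence has the form $a(i)$. Because $\Proc$ is finite, the swap $\exists \underline{a}\,\forall i\,\hat\phi(i)\equiv \forall i\,\exists \ux\,\hat\phi(i, \ux)$ is then sound, where $\ux$ is a fresh tuple of first-order variables of the appropriate enumerated or integer types replacing the local array entries.

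It remains to constrain the auxiliary counters $n_\psi$ correctly and to eliminate the remaining quantifiers. Since by definition of extended arithmetic atoms only \emph{enumerated} arrays can appear inside cardinality subterms, and $\Data$ is enumerated, the possible local $\Data$-types $\sigma$ of a process -- i.e., the Boolean valuations of the atoms occurring in the various $\psi$'s -- are finitely many. Introducing a nonnegative counter $n_\sigma$ for each $\sigma$, we add the constraints (a)~$\sum_\sigma n_\sigma = \mathtt{N}$; (b)~each $n_\psi$ equals the sum of those $n_\sigma$ that make $\psi$ true; (c)~for every $\sigma$ with $n_\sigma > 0$, the instantiation $\exists \ux\,\hat\phi(\sigma,\ux)$ obtained by fixing the enumerated-array atoms according to $\sigma$ must hold. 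Clause (c) is an existential over the arithmetic witnesses in $\ux$ over a quantifier-free linear-arithmetic body involving $\tv_0$ and the $n_\sigma$, eliminable by Presburger quantifier elimination; the outer existential over the $n_\sigma$'s is then eliminable via the quantifier-elimination for linear arithmetic with cardinality developed in \cite{KunkakJAR,arca,JAR15}. The whole procedure is effective and supplies the required $\Sigma_0$-formulas $\Phi_0, \iota_0, \tau_0$, so Proposition~\ref{prop:simulation} yields the strongest counter abstraction.

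The main obstacle is the combinatorial bookkeeping at clause (c): one has to ensure that, after instantiating enumerated-array values by local type $\sigma$, the arithmetic witnesses chosen for processes of different $\sigma$-classes interact correctly both among themselves and with the global counters $n_\psi$. This is exactly what the syntactic restriction of the theorem buys us -- arithmetic array-ids occur only in local atoms $a(i)$ and never inside cardinality subterms, so an independent arithmetic witness per local class is always feasible -- and dropping this restriction would force the heavier machinery announced in~\cite{prep}.
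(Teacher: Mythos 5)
Your proposal is correct and follows essentially the same route as the paper: reduce to Proposition~\ref{prop:simulation}, eliminate the local array occurrences $a(i)$ by reverse skolemization (exploiting flatness and the single $\Proc$-variable), handle the enumerated arrays by counting the finitely many local $\Data$-types (your $n_\sigma$ are exactly the paper's Venn-region/BAPA set variables, and the counters $z_f$ of its implementation), and finish with Presburger/BAPA quantifier elimination. The only differences are organizational --- the paper first disposes of the arithmetic array-ids entirely and only then BAPA-encodes the enumerated ones, whereas you introduce the region counters up front and fold the arithmetic witnesses into a per-region satisfiability condition --- and your intermediate claim that the quantifier swap is ``sound'' should be read as applying only after the cardinality subterms have been re-tethered to the arrays via the $n_\sigma$ constraints, which is exactly what your clauses (a)--(c) accomplish.
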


\begin{proof} In view of Proposition~\ref{prop:simulation}, it is sufficient to show that if $a_1, \dots, a_n$ are array-ids
and 
$\forall i\; \phi(i)$ is a formula like~\eqref{eq:ss}, then $\exists a_1\cdots \exists a_n \forall i \phi(i)$ is equivalent to a formula in pure Presburger arithmetic.

We first show how to eliminate an existential
arithmetic array-id quantifier $\exists a$. This is eliminated (in favour of an extra existentially quantified arithmetic variable) 
by reverse skolemization~\cite{JAR15,reynolds15}: one observes that $\exists a\, \forall i\, \phi(i)$ is equivalent to the formula $\forall i\, \exists x\, \phi(i, x/a(i))$ (see the above observation about the `flatness' of array-ids terms). Then the extra arithmetic existentially quantified variables introduced above are eliminated via Presburger quantifier elimination (notice that 
they do not occur inside $\Data$-atoms or inside abstraction like terms $\sharp\{ k\mid \psi(k)\}$, because $\psi$ is a $\Data$-formula). 

Finally, enumerated array-ids quantifiers can be eliminated using the methods of~\cite{fmsd17}. Alternatively, since only arithmetic variables and enumerated array-ids are left at this point, it is also possible to make a BAPA-encoding and to use  the  quantifier elimination procedure  for BAPA~\cite{KunkakJAR}. Such encoding can be obtained as follows. Notice that $\Data$-atoms involving only the 
variable $j$ can be written as $a(j)= \mathtt{a_k}$ for some enumerated value $\mathtt{a_k}$;\footnote{ 
Atoms like $a(j)= b(j)$ can be eliminated via $\bigvee_k (a(j)= \mathtt{a_k}\wedge b(j)= \mathtt{a_k})$.
} thus if we introduce set variables $S_{a, \mathtt{a_k}}$ for the sets $\{j \mid a(j)= \mathtt{a_k}\}$ , we can write the terms $\{ j\mid \psi(j)\}$ as Boolean combinations of these set variables
$S_{a, \mathtt{a_k}}$. Finally, if $\psi(i)$  is a Boolean combination of $\Data$-atoms and of extended   arithmetic atoms
without arithmetic array-ids, extended arithmetic atoms can be abstracted out of 
 $\forall i\, \psi(i)$  by `guessing' which of them hold (formally, we introduce a big disjunction, indexed by all Boolean assignments to such extended arithmetic atoms) and, when $\psi(i)$ is reduced to a $\Data$-formula, $\forall i\, \psi(i)$ is equivalent to $\mathtt{N}= \sharp\beta$, where $\beta$ is a Boolean combination of the $S_{a, \mathtt{a_k}}$ introduced above.
\begin{comment}
A slightly different alternative (the one which we implemented) works as follows. Our formula is $\exists b_1\cdots \exists b_m \forall i \psi(i)$, where $\psi(i)$  is a Boolean combination of $\Data$-atoms and of extended   arithmetic atoms (not containing arithmetic array-ids anymore). 
For every term of the kind $\sharp\{ k\mid \theta_j(k)\}$ we introduce an extra existentially quantified variable (a `counter') $z_j$ for it and add to our formula a conjunct like $z_j =\sharp\{ k\mid \theta_j(k)\}$ (these and other arithmetic variables we introduce will be eliminated in the very end using Presburger quantifier elimination). In this way, we can rewrite our formula as (let $\uz$ be the tuple of the $z_j$)
$$
\exists \uz.~\bigwedge_j z_j=\sharp\{ k\mid \theta_j(k)\} \wedge  \exists b_1\cdots \exists b_m \forall i \tilde\psi(i)
$$
where now $\tilde \psi$ is a combination of $\Data$-atoms and of Presburger arithmetic atoms. 
\end{comment}
\end{proof}

\section{An Example}\label{app:onethird}

In this section, we show how to apply the four-step methodology presented in Subsection~\ref{subsec:strategy} to a concrete problem. All results below have been certified via our prototype
\textsc{ARCA_SIM} explained in Section~\ref{sec:implementation} below.

The One-Third (OT) algorithm is designed to reach agreement in presence of benign transient faults~\cite{heardof,Biely07,sharpie};
 the specification is  reported in Algorithm~\ref{OTcode}.  The protocol is supposed to work with an unlimited number of failures, but failures are supposed to be transient (processes may behave correctly in some rounds and not correctly in other rounds) and benign (processes, if they send any value, they send their own real value - which might or might not be received by the others - i.e. no fake value is sent or received). 
 To be able to apply our techniques, \emph{we need the extra assumption that the value to be agreed on is 
 taken from a finite preassigned set} - let it be $\{0,1\}$ for simplicity. 
We apply our  four-steps plan.

%%%%%%%
\begin{algorithm}[t]
%\begin{center}
%\begin{tabular}{|c|}\hline
%\begin{minipage}{.95\textwidth}
\begin{footnotesize}
\begin{tabbing}
aaaa\=bb\=cccc\=dddd\=eeee\=ffff   \kill
{\bf Round $k$:} each process executes the following \\
\> \>      \textbf send $val$ to all;\\
\> \> \textbf{if} received values from more than $2\mathtt{N}/3$ distinct
processes  \\
\> \> \>      \textbf{then} set $val$ to the smallest most often received value; \\
\> \> \textbf{if}  more than $2\mathtt{N}/3$ received values equal to $val$,  \\
\> \> \>     \textbf{then} accept $val$.
\end{tabbing}
\end{footnotesize}
%\end{minipage} \\ \hline
%\end{tabular}
%\end{center}
\caption{\label{OTcode}One-Third Algorithm:}
\end{algorithm}
%%%%%%%

\textbf{Step (1):} \emph{we  produce a formalization in higher order logic}. We employ:
\begin{description}
 \item[-] an array-id $V: \mathtt{Proc} \longrightarrow \{0,1\}$ ($V(x)$ is the value currently held by $x$);
 \item[-] an array-id $A: \mathtt{Proc} \longrightarrow \{\bot,0,1\}$ ($A(x)$ is the value accepted by $x$, initially $A(x)=\bot$);
 \item[-] arithmetic array-ids $R_0, R_1$ ($R_0(x)$ is the number of 0-values received by $x$ and $R_1(x)$ is the number of 1-values received by $x$).
\end{description}

We initialize the system using the following formula $\iota$:
\begin{equation}\label{eq:initOT}
 \mathtt{N}>2~\wedge~\forall x~A(x)=\bot
\end{equation}
(the assumption $\mathtt{N}>2$ is not needed, but produces a more readable output).
The transition relation is specified by the formula $\tau$ below:
$$
\begin{aligned}
& ~~~~~\forall i~[0\leq R_0'(i) \leq \sharp \{ x \mid V(x)=0\} \wedge 0\leq R_1'(i) \leq \sharp \{ x \mid V(x)=1\}] ~~~~~~\wedge   
\\ &
\wedge ~\forall i \left[ 
 \begin{aligned}
  & ~~~~~(R_0'(i) + R'_1(i) > 2 \mathtt{N}/3 \wedge R_1'(i)> R_0'(i) \wedge V'(i)=1) ~\vee 
  \\
  &~\vee~ (R_0'(i) + R'_1(i) > 2 \mathtt{N}/3 \wedge R_0'(i)\geq R_1'(i) \wedge V'(i)=0)  ~\vee 
  \\
  & ~\vee~ (R_0'(i) + R'_1(i) \leq 2 \mathtt{N}/3 \wedge V'(i)= V(i))  
 \end{aligned} 
 \right]~~~~~~\;\wedge 
 \\ &
 \wedge ~\forall i \left[ 
 \begin{aligned}
  & ~~~~~(R'_1(i) > 2 \mathtt{N}/3 \wedge A'(i)=1) ~\vee 
  \\
  &~\vee~ (R_0'(i)  > 2 \mathtt{N}/3 \wedge  A'(i)=0)  ~\vee 
  \\
  & ~\vee~ (R_0'(i) \leq 2 \mathtt{N}/3 ~\wedge~ R'_1(i) \leq 2 \mathtt{N}/3 \wedge A'(i)= A(i))  
 \end{aligned} 
 \right]
\end{aligned}
$$
As usual, the primed variables $R'_0, R'_1, V', A'$ denote the updated values of the arrays $R_0, R_1, V, A$ (the arrays $R_0, R_1$ actually do not occur in $\tau$, because the 
update of the local status of the processes only depends on the messages received in the current round - and the numbers of such messages are stored in $R'_0, R'_1$).
Notice that the formula $\tau$ matches the syntactic requirements of Theorem~\ref{thm:main} (just swap the universal quantifier $\forall i$ and the conjunctions).

\vskip 2mm
\textbf{Step (2):} \emph{we manually add counters to our specification}. We introduce six counters, namely 
\begin{equation*}
\begin{aligned}
 & z_{00}=\sharp\{i \mid A(i)=0 \wedge V(i)=0\}, 
 & z_{10}=\sharp\{i \mid A(i)=1 \wedge V(i)=0\},~~ 
 \\
 & z_{\bot 0}=\sharp\{i \mid A(i)=\bot \wedge V(i)=0\},  
 & z_{01}=\sharp\{i \mid A(i)=0 \wedge V(i)=1\}, ~~ 
 \\
 & z_{11}=\sharp\{i \mid A(i)=1 \wedge V(i)=1\},
 & z_{\bot 1}=\sharp\{i \mid A(i)=\bot \wedge V(i)=1\}  ~
\end{aligned}
\end{equation*}
(notice that the counters $z_{10}, z_{01}$ will in fact be constantly equal to 0 during a system run, but we do not assume that this is known in advance).
The \formulae $\iota, \tau$ are modified as follows
\begin{equation}\label{eq:plusOT}
%\begin{aligned}
% &
 \iota^+:\equiv~\iota~\wedge~ \delta,
%\\
%& 
\qquad
 \tau^+:\equiv~\tau~\wedge~ \delta~\wedge \delta', 
\end{equation}
where the auxiliary \formulae $\delta, \delta'$ are the counters definitions supplied in Figure~\ref{fig:macrosOT}.
Our system specification  
\begin{eqnarray*}
&
{\mathcal{S}} =(\Sigma, \{V, A, R, z_{00}, z_{10}, z_{\bot 0}, z_{01}, z_{11}, z_{\bot 1} \}, \Phi, \iota^+, \tau^+)~~
\end{eqnarray*}
 is now complete (we do not need any invariant, so we take $\Phi$ to be $\top$).

\vskip 2mm
\textbf{Step (3):} using the quantifier elimination procedure of Theorem~\ref{thm:main}, \emph{we get rid of higher order variables and  we  compute the projected system}  
\begin{eqnarray*}
&
{\mathcal{S}_0} =(\Sigma_0, \{z_{00}, z_{10}, z_{\bot 0}, z_{01}, z_{11}, z_{\bot 1} \}, \Phi_0, \iota_0, \tau_0)~~.
\end{eqnarray*}
We give the result produced by \textsc{ARCA\_SIM}, with some hand-made manipulations  aiming at making the output more human-readable (all such manipulations are up to logical equivalence).
We use  the auxiliary \formulae from Figure~\ref{fig:macrosOT}.
The formula $\Phi_0$ turns out to be $\top$, whereas $\iota_0$ is 
$$
\psi~ \wedge~ z_{00}= 0~ \wedge~ z_{01}= 0 ~\wedge~ z_{10}= 0 ~\wedge~ z_{11}= 0 ~\wedge~ \mathtt{N}= z_{\bot 0}+z_{\bot 1} ~ \wedge~ 
\mathtt{N}> 2~.
$$
In order to introduce $\tau_0$, we need some extra notation. If $u$ is an arithmetic term involving our counters, we let $u'$ be the same term in which all 
counter variables are primed; we let also $\Delta(u)$ be $u'-u$ and $Decr(u)$ (resp. $Incr(u)$) be $\Delta(u)\leq 0$ (resp. $\Delta(u)\geq 0$). Thus, for instance, $\Delta(z_{00})$ is $z'_{00}-z_{00}$
and $Incr(z_{00})$ is $z'_{00}-z_{00}\geq 0$.
Now $\tau_0$ is the conjunction of
$\psi\wedge \psi'$ (where $\psi$ and $\psi'$ are as defined in Figure~\ref{fig:macrosOT}) with the disjunction of the 7 \formulae below:
$$
\begin{aligned}
%0
& \neg p_0~\wedge \neg p_1~\wedge\neg p_2~\wedge p_3~\wedge~ Incr(z_{00})\wedge~ Incr(z_{\bot 0})\wedge~ Incr(z_{10})~\wedge~
\\ &
\wedge~\Delta(z_{00}+z_{01})= 0 ~\wedge~ \Delta(z_{10}+z_{11})= 0~\wedge~
\Delta(z_{\bot 0}+z_{\bot 1})= 0
\\
&\\
%1
& \neg p_0~\wedge \neg p_1~\wedge p_2~\wedge \neg p_3~\wedge Decr(z_{00})\wedge~ Decr(z_{\bot 0})\wedge~ Decr(z_{10})~\wedge~
\\ 
& \wedge~\Delta(z_{00}+z_{01})= 0 ~\wedge~ \Delta(z_{10}+z_{11})= 0~\wedge~
\Delta(z_{\bot 0}+z_{\bot 1})= 0
\\
&\\
%2
& \neg p_0~\wedge \neg p_1~\wedge p_2~\wedge p_3~\wedge \Delta(z_{00}+z_{01})= 0 ~\wedge~ \Delta(z_{10}+z_{11})= 0~\wedge~
\\
&\wedge~
\Delta(z_{\bot 0}+z_{\bot 1})= 0
\\
& \\
%3
& \neg p_0~\wedge  p_1~\wedge \neg p_2~\wedge \neg p_3~\wedge Incr(z_{00})\wedge~ Decr(z_{01})\wedge~ Decr(z_{10})~\wedge~ 
\\
& \wedge~Decr(z_{11})\wedge~Decr(z_{\bot 0})~\wedge~ Decr(z_{\bot 1})
\\
& \\
%4
& \neg p_0~\wedge  p_1~\wedge \neg p_2~\wedge  p_3~\wedge 
~Decr(z_{\bot 1})
~\wedge~
Decr(z_{01})
~\wedge~
Decr(z_{11})
~\wedge
\\
 &
~\wedge~
Decr(z_{10}+z_{11})
~\wedge~
Incr(z_{00}+z_{01}+z_{10}+z_{11})
\\
& \\
%5
& p_0~\wedge\neg  p_1~\wedge \neg p_2~\wedge \neg p_3~\wedge~
~Decr(z_{00})
~\wedge~
Decr(z_{01})
~\wedge~
Decr(z_{10})
~\wedge~
\\ &
\wedge~
Decr(z_{\bot 0})
~\wedge~       
Decr(z_{\bot 1})
\\
& \\
%6
& p_0~\wedge \neg p_1~\wedge  p_2~\wedge \neg p_3~\wedge
Decr(z_{00})
~\wedge~ 
Decr(z_{10})
~\wedge~ 
Decr(z_{\bot 0})
~\wedge~ 
\\ &
\wedge~
Decr(z_{00}+z_{01})
~\wedge~ 
Decr(z_{\bot 0}+z_{\bot 1})
\end{aligned}
$$

\begin{figure}[t]
%\centering
%{\scriptsize
\begin{tabular}{lll}
&$\delta:~\equiv~~~~$
&$
z_{00}=\sharp\{i \mid A(i)=0 \wedge V(i)=0\}
\wedge
z_{10}=\sharp\{i \mid A(i)=1 \wedge V(i)=0\}
\wedge
$\\
&&$
\wedge
z_{\bot 0}=\sharp\{i \mid A(i)=\bot \wedge V(i)=0\} \wedge  
 z_{01}=\sharp\{i \mid A(i)=0 \wedge V(i)=1\}
$
\\
&&$
\wedge z_{11}=\sharp\{i \mid A(i)=1 \wedge V(i)=1\}
 \wedge  
 z_{\bot 1}=\sharp\{i \mid A(i)=\bot \wedge V(i)=1\}  
$
\\
&& \\
&$\delta':\;\equiv~~~~$
&$
z'_{00}=\sharp\{i \mid A'(i)=0 \wedge V'(i)=0\}
\wedge
z'_{10}=\sharp\{i \mid A'(i)=1 \wedge V'(i)=0\}
\wedge
$\\
&&$
\wedge
z'_{\bot 0}=\sharp\{i \mid A'(i)=\bot \wedge V'(i)=0\} \wedge  
 z'_{01}=\sharp\{i \mid A'(i)=0 \wedge V'(i)=1\}
$
\\
&&$
\wedge z'_{11}=\sharp\{i \mid A'(i)=1 \wedge V'(i)=1\}
 \wedge  
 z'_{\bot 1}=\sharp\{i \mid A'(i)=\bot \wedge V'(i)=1\}  
$
\\
&&\\
%%%%%%%%%%%%%%%%%%%%%%
&$\psi:\;\equiv~~~~\;$  
&
$
   ~0\leq z_{00}\leq \mathtt{N} ~\wedge~0\leq z_{10}\leq \mathtt{N} ~\wedge 0\leq z_{\bot 0}\leq \mathtt{N} ~\wedge
$
\\
&&
$
   ~0\leq z_{01}\leq \mathtt{N} ~\wedge~0\leq z_{11}\leq \mathtt{N} ~\wedge 0\leq z_{\bot 1}\leq \mathtt{N} ~\wedge
$
\\
&& $
   ~\mathtt{N}~=~ z_{00}+ z_{10}+ z_{\bot 0} + z_{01}+ z_{11}+ z_{\bot 1}
$
\\
&   &
\\
&$\psi':\;\equiv~~~~\;$  
&
$
   ~0\leq z'_{00}\leq \mathtt{N} ~\wedge~0\leq z'_{10}\leq \mathtt{N} ~\wedge 0\leq z'_{\bot 0}\leq \mathtt{N} ~\wedge
$
\\
&& $
   ~0\leq z'_{01} \leq\mathtt{N} ~\wedge~0\leq z'_{11} \leq\mathtt{N} ~\wedge 0\leq z'_{\bot 1}\leq \mathtt{N}~\wedge
$
\\
&& $
   ~\mathtt{N}~=~ z'_{00}+ z'_{10}+ z'_{\bot 0} + z'_{01}+ z'_{11}+ z'_{\bot 1}
$
\\
&   &
\\
%%%%%%%%%%%%%%%%
&$p_0:~ \equiv~~~~$ &$t < z_1$ 
\\
&$p_1:~ \equiv~~~~$ &$t < z_0$ 
\\
%\\
&$ p_2:~\equiv~~~~$ &$t+1<2z_1~\wedge~ 0<z_0 $
\\
%\\
& $p_3:~ \equiv~~~~$ &$t<2 z_0 ~\wedge~ 0 < z_1$
\\
\end{tabular}
\caption{\label{fig:macrosOT} Auxiliary \formulae for OT (we let $t:=\llcorner 2\mathtt{N}/3 \lrcorner$, $z_0:=z_{00}+z_{10}+z_{\bot 0}$
and $z_1:=z_{01}+z_{11}+z_{\bot 1}$).}
\end{figure}

\begin{table}[t]
\begin{center}
%\begin{table}{|c|}\hline
%\begin{minipage}{.95\textwidth}
\begin{footnotesize}
\begin{tabbing}
aaaa\=bb\=cccc\=dddd\=eeee\=ffff   \kill
{\bf Agreement:}  \\
\> whenever two processes have reached a decision, the values they have decided on must be equal.~~~~~~~~~~~~~~~~~~~~~~~~~~~~~~~~~~~~~~~~\,
\\
{\bf Weak Validity:}  \\
\> if all processes propose the same initial value, they must decide on that value. 
\\
{\bf Irrevocability:}  \\
\> if a process has decided on a value, it does not revoke the decision later.
\end{tabbing}
%\end{minipage} \\ \hline
%\end{table}
%\end{center}
\caption{\label{fig:propOT}Properties to be certified for OT.}
\end{footnotesize}
\end{center}
\end{table}
%%%%%%%

\vskip 2mm
\textbf{Step (4):} \emph{we express the safety properties we are interested in using our projected counters and we use an SMT-based tool to check  them}. 
The relevant properties are agreement, 
weak validity and irrevocability (see Table~\ref{fig:propOT}).
\emph{Agreement} can be formalized with our counters by saying that the system never reaches a status satisfying $z_{00}+z_{01}>0 \wedge z_{10}+z_{11}>0$.
\emph{Weak validity} can be tested by checking that the system never reaches a status satisfying $z_{10}+z_{11}>0$, once  
initialized to $\iota \wedge z_{\bot 0}=\mathtt{N}$. \emph{Irrevocability} cannot be fully expressed with our counters, but can be approximated by adding a switch $S$
that is turned to $\mathtt{tt}$ as soon as we have $z_{00}>0$ and then checking that 
the system cannot reach a status satisfying $S=\mathtt{tt}\wedge z_{00}=0$. 
All the above problems can be formulated with a different choice of counters (we employed a maximum choice above); in all variants,\footnote{
The expected obvious property that
$z_{01}+z_{10}$ is always equal to 0 can also be checked by our tool combination.
} \textsc{ARCA\_SIM} takes 
 1-2 seconds to produce the HORN SMT\_LIB file for $\mu Z$ and the latter solves the related fixpoint problem in at most half a second, see the experimental data in Section~\ref{sec:implementation} below.

\begin{comment}
The projected counter system we obtained for the OT algorithm is indeed bisimilar to the original one: this can be easily proved e.g. using 
Proposition~\ref{prop:bisimulation}. The reason why we have a bisimulation here is twofold: (i) the array-ids $R_0, R_1$ do not 
occur in $\tau$ (only their primed counterpart $R'_0, R'_1$ occur) and  (ii) the remaining array-ids $V, A$ and their behavior in $\tau$ can be recovered (up to 
 state isomorphisms) from the values of our six counters.
 \end{comment}

\section{A First Implementation}\label{sec:implementation}

We implemented the procedure of Theorem~\ref{thm:main} in a prototype tool called \textsc{ARCA\_SIM}. Such tool 
%\textsc{ARCA\_SIM}
accepts system specifications matching the syntactic restrictions of Theorem~\ref{thm:main} and produces as output a file in the HORN SMT\_LIB format, ready to be model-checked e.g. by $\mu Z$~\cite{muZ}, the fixpoint engine of the SMT solver \textsc{z3}. In successful cases, $\mu Z$ produces an invariant (entirely expressed in terms of our counters) which guarantees the safety of the original system.

%sg 1/9/17 riscritto perchè impreciso
A specification file for \textsc{ARCA\_SIM} should first contain \emph{declarations} for parameters, integer variables and arithmetic and enumerated array-ids.
Parameters include a symbol $\mathtt{N}$ denoting the (finite but unknown) number of processes acting in the system; moreover,
with each enumerated array-id, a number $m$ is associated, whose meaning is that of telling the tool that the values of such array-id are taken into
the set $\{0, \dots, m-1\}$. Then \emph{counters definitions} are introduced: these must have the form of  equalities $z = \sharp \{ k\mid \psi(k)\}$, where $\psi$ is a data formula. The system \emph{transition} is given as a single variable universally quantified disjunction of cases $\forall x \bigvee_i \tau_i$, where each $\tau_i$ is specified via a formula of the kind $\phi_{i1} \wedge \phi_{i2}$, where: (i) $\phi_{i1}(x)$ is a conjunction of extended arithmetic atoms (in such atoms, terms like $\sharp \{ k\mid \psi(k)\}$ must have been replaced by the corresponding  counters); (ii) $\phi_{i2}(x)$ is 
%a singly universally quantified formula of the kind $\forall k \psi(k)$, where $\psi$ is 
a $\Data$-formula.  
%
%disjunction of cases; each case is specified via a formula $\phi_1 \wedge \phi_2$, where: (i) $\phi_1$ is a conjunction of extended arithmetic atoms (in such atoms, %terms like $\sharp \{ k\mid \psi(k)\}$ must have been replaced by the corresponding  counters); (ii) $\phi_2$ is a singly universally quantified formula of the kind %$\forall k \psi(k)$, where $\psi$ is a data formula. 
%
The \emph{initial} formula follows the same syntax as the transition formula (but only one case is allowed), whereas the formula expressing the (negation of the) \emph{safety} property must be an arithmetic formula containing only counters, integer variables and parameters. 

\textsc{ARCA\_SIM} produces a file for $\mu Z$ basically 
 following the proof of Theorem~\ref{thm:main}; it uses a BAPA-quantifier elimination algorithm adapted to the shape 
 of the \formulae arising from our benchmarks. More specifically, the tool proceeds as follows:
\begin{description}
 \item[{\rm (i)}] first, it eliminates (from the arithmetic part $\phi_{i1}$ of  each transition case) the arithmetic array-ids by  reverse skolemization and 
 Presburger quantifier elimination;
 \item[{\rm (ii)}] then, the whole transition is rewritten as a disjunction of \formulae of the kind
 \begin{equation}\label{eq:dis}
  \bigwedge_i (z_i=\sharp \{ k \mid \psi_i(k)\}) \wedge \alpha \wedge \forall k\, \theta(k)
 \end{equation}
 where we have, besides the counter definitions $z_i=\sharp \{ k \mid \psi_i(k)\}$, a Boolean assignment $\alpha$ (seen as a conjunction of literals)  to the arithmetic atoms occurring in the problem, and 
 a single-variable universally quantified $\Data$-formula $\forall k\, \theta(k)$;
 \item[{\rm (iii)}] auxiliary counters are now introduced: we have one counter $z_f$ for each function $f$ associating values to enumerated array-ids
 ($z_f$ counts the cardinality of the set $\{k \mid \bigwedge_a a(k)= f_a
 \wedge \bigwedge_a a'(k)= f_{a'}\}$); the previous counters are expressed as linear combinations of these new counters;
 in addition, in each disjunct~\eqref{eq:dis}, the universally quantified formula $\forall k \theta(k)$  is replaced by the equation $\mathtt{N} = \sum \epsilon_f z_f$, where 
 $\epsilon_f$ is 0 or 1 depending on whether the $\Data$-formula defining $z_f$ is consistent or not with $\theta$;
 \item[{\rm (iv)}] in the final steps, all arithmetic atoms involving old and new counters are collected for each  disjunct~\eqref{eq:dis}; the new counters are eliminated by quantifier elimination and the resulting \formulae give the disjuncts of the transition of the projected counter system.
\end{description}

Contrary to what one might expect, the quantifier elimination steps in (i) and (iv) are not so problematic, because of the special
shapes of the arithmetic \formulae arising from the benchmarks we analyzed. In fact, we did not even 
use a full Presburger quantifier elimination module in \textsc{ARCA\_SIM} for the reasons we are going to explain. In our examples, the quantifier elimination problems in (i) 
involve just easy (`difference bounds'-like) constraints and those in (iv) are usually solved by a substitution (in other words, the formula where a variable $z$ needs to be eliminated from, always contains an equality like $z=t$).\footnote{ 
In case a maximum choice of counters is made by the user, one can even formally prove that this is always the case.
} Notice also that, in case a difficult integer quantifier elimination problem arises, shifting to the (better 
behaved from the complexity viewpoint) Fourier-Motzkin real arithmetic
quantifier elimination procedure is a sound strategy: this is because, in the end, the tool needs to produce just a simulation (i.e. an abstraction). Although \textsc{ARCA_SIM} was prepared to make such a shifting to Fourier-Motzkin procedure, it never did it  during our experiments.

The step (ii) basically amounts to an ``all sat''  problem (i.e. to the problem of listing all Boolean assignments satisfying a formula), which is difficult but can be handled efficiently. The real bottleneck seems to be the need of introducing in (iii) a large amount of auxiliary counters: future work should concentrate on improving heuristics  here. 
%sg 1/9/17 rewording
Notice that, even in the case the user made an (exponentially expensive) maximum choice of counters, the counters we need in (iii) are even more, because the auxiliary counters in (iii) must take into consideration both the actual and the updated enumerated array-ids (by a `maximum choice of counters' we mean the introduction of a counter for each of the sets $\{k \mid \bigwedge_a a(k)= f_a\}$, varying $f$ among the functions associating values to enumerated array-ids).

%\subsection{Some Experiments}\label{app:experiments}
\vskip 1mm\noindent\textbf{Some Experiments.}~
In this Subsection we report our first experiments; 
%In the Appendix~\ref{app:onethird} we describe one example in detail and in the 
%Appendix~\ref{app:experiments} we report our first experiments; 
the related files, as well as \textsc{ARCA\_SIM} executables are available at the following link:

\centerline{\url{http://users.mat.unimi.it/users/ghilardi/arca/arcasim.zip}~~.}

\noindent
Unfortunately, for various reasons, the specifications for  the tool \textsc{ARCA} we used in~\cite{fmsd17} for invariant checking and bounded model-checking  are insufficient and not compatible with the specifications accepted by \textsc{ARCA\_SIM}. 
We only analyzed three representative benchmarks: 
(i) the One-Third (OT) algorithm from~\cite{heardof}, whose  formalization is  described in Section~\ref{app:onethird} above; (ii) the Byzantine Broadcast Primitive (BBP) algorithm from~\cite{toueg87}, whose  formalization is described in~\cite{fmsd17}, Section 7.3; (iii) the Send Receive Broadcast Primitive (SRBP) algorithm from~\cite{Srikanth87}, whose formalization is described in~\cite{fmsd17}, Section 6.
%
\begin{comment}
\begin{itemize}
 \item the One-Third (OT) algorithm from~\cite{heardof}, whose  formalization is  described in Section~\ref{app:onethird} above;
 \item the Byzantine Broadcast Primitive (BBP) algorithm from~\cite{toueg87}, whose  formalization is described in~\cite{fmsd17}, Section 7.3;
 \item the Send Receive Broadcast Primitive (SRBP) algorithm from~\cite{Srikanth87}, whose formalization is described in~\cite{fmsd17}, Section 6.
\end{itemize}
\end{comment}
For each of these benchmarks, we checked the relevant properties mentioned in the literature (for OT also the emptyness of the sets counted by 
$z_{01}, z_{10}$, see Section~\ref{app:onethird}).\footnote{ Relay properties are split into two safety properties, as explained in~\cite{fmsd17}.}

In the table below, we report the time employed by \textsc{ARCA\_SIM} to produce the Horn SMT-LIB problem and the time employed by $\mu Z$ to solve the latter problem. Timings are all in seconds.
We used a PC equipped with Intel Core i7  processor and operating system Linux Ubuntu 16.04 (64 bits).
We also tried (and included in the  distribution) some buggy versions - taken from~\cite{fmsd17} - of the above algorithms; we obtained the expected \texttt{unsat} answer
from $\mu Z$ (with 
performances similar to those  in the above table). Such \texttt{unsat} answers just mean that the system is `possibly unsafe': they do not
certify bugs, because our counters simulations are, in fact, just simulations. Sometimes, with a maximum choice of counters, it is possible to prove (only offline with the actual techniques) that we are in presence of a bisimulation of the original system and in this case an \texttt{unsat} answer reveals the real presence of a bug. 

{\small
%\vskip 2mm
%%%%%
%\begin{table}[h!]
\begin{center}
%\caption{\label{tabBMC} Evaluated algorithms and experimental results.}
%\centerline{
\begin{tabular}[center]{|l|c|c|c|c|} \hline
\textbf{Algorithm} & \textbf{Property} & \textsc{ARCA\_SIM} \textbf{Time} & \textbf{$\mu Z$ Time} & \textbf{Total Time}  \\ \hline
SRBP \cite{Srikanth87} & Correctness & 2.68  &  0.09   &  2.77    \\ \hline
SRBP \cite{Srikanth87} & Unforgeability & 2.73  & 0.06  &  2.79   \\ \hline
SRBP \cite{Srikanth87} &  Relay I & 2.68 & 0.06 &   2.74 \\ \hline
SRBP \cite{Srikanth87} & Relay II & 2.72 & 0.03 &  2.32 \\ \hline
BBP \cite{toueg87} & Correctness & 3.20 & 0.03 & 3.23   \\ \hline
BBP \cite{toueg87} & Unforgeability & 3.23 & 0.07 & 3.30   \\ \hline
BBP \cite{toueg87} & Relay I & 3.21 & 0.02 &  3.23 \\ \hline
BBP \cite{toueg87}  & Relay II & 3.21 & 0.13 & 3.34  \\ \hline
OT \cite{heardof} & Agreement & 0.76 & 0.26 & 1.02  \\ \hline
OT \cite{heardof} & Weak Validity  & 0.76 & 0.02 & 0.78 \\ \hline
OT \cite{heardof} &  Irrevocability  & 2.03 & 0.42 & 2.45 \\ \hline
OT \cite{heardof} & Empty Counters & 0.24 & 0.11 &  0.35 \\ \hline
\end{tabular}
%}
\end{center}
%\end{table}
}

\section{Conclusions}\label{sec:conclusions}

We introduced a  technique for automatically building  counter simulations: the technique consists in modeling  system specifications in higher order logic, then in introducing counters for definable sets and finally in exploiting quantifier elimination results to get rid of higher order variables.
Such technique is quite flexible and 
since, whenever it applies,  it always supplies the \emph{best} simulation, it should be in principle capable  to cover all results obtainable via counter abstractions. 
We underline some further important specific features of our approach.

First of all, the approach is purely \emph{declarative}: our starting point is the informal description of the algorithms (e.g. in some pseudo-code) and the first step we propose is a direct translation into a standard logical formalism (typically, classical Church type theory), without relying for instance  on ad hoc automata devices
or on ad hoc specification formalisms. We believe that this choice can ensure flexibility and portability of our method.

Secondly, the amount of human interaction we require is nevertheless very limited and \emph{confined to design choices}: although the final outcome of our investigations 
should be the integration of our techniques into some logical framework, the key leading to their success relies almost entirely on results (satisfiability and quantifier elimination algorithms) belonging to the realm of decision procedures. 

A delicate point is related to the \emph{syntactic limitations} we require on the \formulae describing system specifications (see the statement of Theorem~\ref{thm:main}): such syntactic limitations are needed to ensure higher order quantifier elimination. Although it seems that a significant amount of benchmarks are captured despite such limitations, it is essential  to develop techniques applying in more general cases. In fact,
Theorem~\ref{thm:main} can be extended in various directions~\cite{prep}; in particular, extensions covering specifications with \formulae containing an extra layer of existentially quantified variables of sort $\Proc$ cover classical benchmarks like those in~\cite{tacas06} and look to be relatively easily implementable.

The \emph{integration of the methodology explained in this paper with proof assistants} is another interesting challenge to be pursued; such integration could on one hand \emph{double-check the invariants and the related proof certificates given  by the SMT-solvers} and on the other hand \emph{use counters invariants supplied by our techniques as lemmata} inside complex interactive verifications tasks.

To conclude, we mention some recent work on the verification of fault-tolerant distributed systems, starting with our own previous work.
The additional original contributions with respect to our previous paper~\cite{arca} and its journal version~\cite{fmsd17} are due to the fact 
that in this paper we  moved from bounded model-checking and invariant checking to the much more challenging task of full model-checking via \emph{invariant synthesis}. As discussed in~\cite{fmsd17} (Section 7), standard model-checking techniques are difficult to apply 
in the present context of fault-tolerant distributed systems  because Pre- and Post-image computations are very expensive and lead to fragments for which full decision procedures 
seem not to be 
available. This is why we tried a different approach, via counter simulations. 

Papers~\cite{konnov2013A,konnov2013B,KonnovVW15}
%{konnovI&C} {konnov_concur} {konnov_cav} {konnov_fmcad}
%%%%%%%%%%%Papers~\cite{konnov2013B,konnov_concur,konnov2013A,KonnovVW17}
%%%%%%%%%
represent a very interesting and effective research line (summarized in~\cite{KonnovVW15}), where cardinality constraints are not directly handled but abstracted away 
using  counters. In this sense, this research line looks similar to the methodology we applied in this paper (and in contrast to the
alternative methodology we adopted in our previous paper~\cite{arca}); however  abstraction in~\cite{KonnovVW15} and in related papers is not obtained via logical formalizations and  quantifier elimination, but via
a special specification language (`parametric  Promela') and/or via special devices, called `threshold automata'.
%
%As a result, a remarkable amount of algorithms are certified, although . 
A comparison with the counter systems we obtain is not immediate and not always possible because the authors of~\cite{KonnovVW15} work on 
asynchronous (not round-based) versions of the algorithms and because their method  suffers of some lack of expressiveness
 whenever local counters are unavoidable. On the other hand, they are able to certify also liveness properties, whereas at 
the actual stage we can only do that  by making reductions (whenever possible) to safety or bounded model checking problems.

Paper~\cite{sharpie} directly handles cardinality constraints for interpreted sets by employing specifically tailored abstractions and some incomplete inference schemata at the level of the decision procedures. Nontrivial  invariant properties are synthesized and checked, 
based on Horn constraint solving technology; this is the same technology we rely on in our final step, however the counter systems 
we get are `as accurate as possible', in the sense that they supply `the best simulations' as stated in Theorem~\ref{thm:main}. 

Paper~\cite{dragoj} introduces an expressive logic, specifically tailored to handle consensus problems (whence the name `consensus logic' $CL$). Such logic employs arrays with values into power set types, hence it is naturally embedded 
 in a higher order logic context.  Paper~\cite{dragoj} is not concerned with simulations and bisimulations, rather it uses an incomplete algorithm in order to certify   invariants. A smaller fragment (identified via several syntactic restrictions) is introduced in the final part of the paper and a decidability proof for it 
is sketched. 

Finally, we mention the effort  made by the interactive theorem proving community in formalizing and verifying fault-tolerant distributed algorithms (see e.g.~\cite{merz}); such approach is a natural complement to ours.

%\nocite{*}
\bibliographystyle{eptcs}
\bibliography{references}

\appendix

\newpage

\end{document}